\newtheorem{lemma}{Lemma}[section]
\newtheorem{theorem}[lemma]{Theorem}
\newcommand{\opt}{\textrm{\sc OPT}}
\newcommand{\etal}{et al.\ }
\newcommand{\eps}{\epsilon}
\newcommand{\Algorithm}[1]{\textrm{\sc #1}} 
\def\Set#1{\left\{ #1 \right\}}
\def\Paren#1{\left( #1 \right)}	
\def\Brack#1{\left[ #1 \right]}
\newcommand{\qs}{\mathcal{Q^S}(t)}
\newcommand{\ws}{\mathcal{W^S}(t)}
\newcommand{\qo}{\mathcal{Q^O}(t)}
\newcommand{\cs}[1]{C_{#1}^S}
\newcommand{\co}[1]{C_{#1}^O}
\newcommand{\rs}[1]{R^S(#1,t) }
\newcommand{\vo}[1]{V^O(#1,t) }
\newcommand{\rsc}[1]{R^S(#1,\cs{#1}) }
\newcommand{\voc}[1]{V^O(#1,\cs{#1}) }
\newcommand{\ps}[1]{p^S_{#1}(t) }
\newcommand{\po}[1]{p^O_{#1}(t) }
\newcommand{\srpt}{\Algorithm{SRPT}}
\newcommand{\dt}{\mathrm{dt}}
\newcommand{\dds}{\frac{\mathrm{d}}{\dt} \srpt}
\newcommand{\ddphi}{\frac{\mathrm{d}}{\dt} \Phi}
\newcommand{\sjobs}[1]{\mathcal{S}_{#1}}
\newcommand{\tjobs}[1]{\mathcal{T}_{#1}}
\begin{document}
\title{Online Scheduling on Identical Machines using SRPT\thanks{This paper contains results that appeared in the preliminary version \cite{FoxM11}.}}

\author{
Kyle Fox\thanks{Department of Computer Science, University of
Illinois, 201 N.\ Goodwin Ave., Urbana, IL 61801. {\tt
kylefox2@illinois.edu}. This work was done while the author was at Google Inc.
(Mountain
View).} \and Benjamin
Moseley\thanks{Department of Computer Science, University of
Illinois, 201 N.\ Goodwin Ave., Urbana, IL 61801. {\tt
bmosele2@illinois.edu}. This work was done while the author was at Yahoo! Labs
(Santa
Clara).} }
\date{}
\maketitle

\begin{abstract}

Due to its optimality on a single machine for the problem of minimizing average
flow time, Shortest-Remaining-Processing-Time (\srpt) appears to be the most
natural algorithm to consider for the problem of minimizing average flow time on
multiple identical machines. It is known that $\srpt$ achieves the best possible
competitive ratio on multiple machines up to a constant factor. Using resource
augmentation, $\srpt$ is known to achieve total flow time at most that of the
optimal solution when given machines of speed $2- \frac{1}{m}$. Further, it is
known that $\srpt$'s competitive ratio improves as the speed increases; $\srpt$
is $s$-speed $\frac{1}{s}$-competitive when $s \geq 2- \frac{1}{m}$.

However, a gap has persisted in our understanding of $\srpt$. Before this work,
the performance of $\srpt$ was not known when $\srpt$ is given $(1+\eps)$-speed
when $0 < \eps < 1-\frac{1}{m}$, even though it has been thought that $\srpt$ is
$(1+\eps)$-speed $O(1)$-competitive for over a decade. Resolving this question
was suggested in Open Problem 2.9 from the survey ``Online Scheduling'' by
Pruhs, Sgall, and Torng \cite{PruhsST}, and we answer the question in this paper.
We show that $\srpt$ is \emph{scalable} on $m$ identical machines. That is, we
show $\srpt$ is $(1+\eps)$-speed $O(\frac{1}{\eps})$-competitive for $\eps
>0$. We complement this by showing that $\srpt$ is $(1+\eps)$-speed
$O(\frac{1}{\eps^2})$-competitive for the objective of minimizing the
$\ell_k$-norms of flow time on $m$ identical machines. Both of our results rely
on new potential functions that capture the structure of \srpt. Our results,
combined with previous work, show that $\srpt$ is the best possible online
algorithm in essentially every aspect when migration is permissible.

\end{abstract}

\setcounter{page}{0} \thispagestyle{empty} \clearpage

\section{Introduction}

Scheduling jobs that arrive over time is a fundamental problem faced by a
variety of
systems. In the simplest setting there is a single machine and $n$ jobs that
arrive online. A job $J_i$ is released at time $r_i$ where $i \in [n]$. The job
has some processing
time $p_i$. This is the amount of time the scheduler must devote to job $J_i$ to
complete
the job. The goal of the scheduler is to determine which job should be processed
at any given time while
optimizing a quality of service metric. In the \emph{online} setting the
scheduler is not
aware of a job until it is released. Thus, an online scheduler must make
scheduling
decisions without access to the entire problem instance. Having the scheduler be
online is
desirable in practice since most systems are not aware of the entire jobs
sequence in advance.

The most popular quality of service metric considered in online scheduling
theory is total
flow time, or equivalently, average flow time~\cite{PruhsST}.
The flow time\footnote{Flow time
is also
referred to as response time or waiting time.} of a job is the amount of time it
takes the
scheduler to satisfy the job. Formally, the flow time of job $J_i$ is $C_i -
r_i$ where $C_i$ is
the completion time of job $J_i$. The completion time of a job $J_i$ is defined
to be the
earliest time $t$ such that the scheduler has devoted $p_i$ units of time to job
$J_i$
during $(r_i, t]$. The total flow time of the schedule is $\sum_{i \in [n]} C_i
- r_i$. By
focusing on minimizing the total flow time, the scheduler minimizes the total
time jobs must
wait to be satisfied.

On a single machine, the algorithm Shortest-Remaining-Processing-Time ($\srpt$)
always
schedules the job whose remaining processing time is the smallest, breaking ties
arbitrarily. It is well known that
$\srpt$ is optimal for total flow time in this setting.  A more 
complicated scheduling model is where there are $m$ identical machines.
Minimizing the flow time in this model has
been studied extensively in scheduling theory 
\cite{LeonardiR07,AwerbuchALR02,ChekuriGKK04,AvrahamiA07,BecchettiL04,ChekuriKZ01,TorngM08}.
When there is more than one machine the
scheduler must not only chose which subset of jobs to schedule,
but it must also decide how to distribute jobs across the machines.
Naturally, it is assumed that
a job can only be processed by one machine at a time. For this scheduling
setting, it is known that there there is
a $\Omega(\min\{\log P , \log n/m \})$ lower bound on any online randomized
algorithm in
the oblivious adversary model \cite{LeonardiR07}. Here $P$ is the ratio of
maximum
processing time to minimum processing time. The algorithm $\srpt$ in the $m$
identical machine setting always schedules the $m$ jobs with least remaining
processing time. $\srpt$ has competitive ratio $O(\min\{\log P , \log n/m \})$
for average flow time, making $\srpt$ the best possible algorithm up to a
constant factor in the competitive ratio.

The strong lower bound on online algorithms 
has led previous work to use a resource augmentation analysis. In a
\emph{resource augmentation} analysis the adversary is given $m$ unit-speed
processors and the algorithm is given $m$ processors of speed $s$
\cite{KalyanasundaramP95}. We say that an algorithm is $s$-speed $c$-competitive
if the algorithm's objective is within a factor of $c$ of the optimal solution's
objective when the algorithm is given $s$ resource augmentation. An ideal
resource augmentation analysis shows that an algorithm is $(1+ \eps)$-speed
$O(1)$-competitive for any fixed $\eps > 0$. Such an algorithm is called
\emph{scalable}. A scalable algorithm is $O(1)$-competitive when given the
minimum amount of
extra resources over the adversary. Given the strong lower bound on flow time in
the
identical machines model, finding a scalable algorithm is essentially the best
positive result
that can be shown using worst case analysis.

Given that $\srpt$ is an optimal algorithm on a single machine and achieves the
best
possible competitive ratio on multiple machines without resource augmentation,
it was
widely thought that $\srpt$ would be a scalable algorithm in the multiple
machine case.
However, the competitive ratio of $\srpt$ when given $1+\eps$ speed had been
unresolved for about a decade when $0 < \eps < 1-\frac{1}{m}$. Instead, another
algorithm was shown to be scalable \cite{ChekuriGKK04}. This algorithm
geometrically groups jobs according to their size. It uses these groupings to
assign each job to exactly one machine. The algorithm then runs the single
machine version of $\srpt$ separately on each machine.

Although the competitiveness of $\srpt$ was not 
known when given speed less than $2-\frac{1}{m}$,  it was known that $
\srpt$ achieves total flow time at most that of the optimal solution's flow time
when given
machines of speed at least $2 - \frac{1}{m}$ \cite{PhillipsSTW02}. In fact, this
has been extended to show that $ \srpt$ is $s$-speed $\frac{1}{s}$-competitive
when $s \geq 2- \frac{1}{m}$ \cite{TorngM08}.
This result shows that $\srpt$ `efficiently' uses the faster processors it is
given. In the fairly
recent online scheduling survey of Pruhs, Sgall, and Torng it was suggested in
Open
Problem 2.9 that an important question is to resolve whether or not $\srpt$ is a
scalable
algorithm \cite{PruhsST}. In this paper we answer this question in the
affirmative by
showing the following theorem.  

\begin{theorem}\label{thm:avg}
The algorithm $\srpt$ is $(1+\eps)$-speed
$\frac{4}{\eps}$-competitive for average flow
time on $m$ identical parallel machines for $\eps >0$. 
\end{theorem}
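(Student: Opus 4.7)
The plan is to analyze $\srpt$ by amortized local analysis using a potential function $\Phi(t)$. Let $n^S(q,t)$ denote the number of jobs alive in $\srpt$ at time $t$ whose remaining processing time exceeds $q$, and let $n^O(q,t)$ denote the analogous quantity for an arbitrary optimal schedule $\opt$. Since flow time accumulates at a rate equal to the number of alive jobs, $\dds(t) = n^S(0,t)$ and $\ddopt(t) = n^O(0,t)$. I would design $\Phi$ to satisfy (i) $\Phi(0) = 0$, (ii) $\Phi(t) \geq 0$ for all $t$, (iii) $\Phi$ does not jump upward at any release or completion event, and (iv) at all other times
\[
\dds(t) + \ddphi(t) \;\leq\; \tfrac{4}{\eps}\,\ddopt(t).
\]
Integrating (iv) in $t$ and invoking (i)--(iii) then yields the theorem.

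The natural first candidate is a potential built from the pointwise excess $\bigl(n^S(q,t) - n^O(q,t)\bigr)^+$, integrated over $q$ and scaled by $1/\eps$, e.g.\ $\Phi(t) = (c/\eps)\int_0^\infty (n^S(q,t) - n^O(q,t))^+\,dq$ for a constant $c$. The boundary and discrete conditions are simple for this shape: a release of job $J_i$ adds one to both $n^S(\cdot,t)$ and $n^O(\cdot,t)$ on $[0,p_i)$, so the integrand is unchanged, and a completion only affects the integrand at $q=0$, a measure-zero level. For the running inequality, one compares at each level $q$ with $n^S(q,t) > n^O(q,t)$ the rates at which jobs of each schedule cross below $q$; the $(1+\eps)$-speed augmentation of $\srpt$ provides an $\eps$ factor of extra processing on those jobs, and multiplying by the $1/\eps$ prefactor of $\Phi$ should produce enough potential drop to cover the excess of alive jobs in $\srpt$ relative to $\opt$ at level $0$. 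The constant $4$ would come from the multiplicative overheads in the case splits.

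The main obstacle is a ``phantom mass'' phenomenon special to multi-machine $\srpt$: at some level $q$, $\srpt$ may have many alive jobs whose remaining time is just above $q$, and yet all $m$ machines are busy processing jobs with remaining time below $q$, so $n^S(q,t)$ is not decreasing at all at that instant. A purely pointwise potential collects no $\eps$-savings at such a level even though $\srpt$ is paying flow time there. The way out is to augment $\Phi$ with a second, ``global'' term that credits $\srpt$ for the work it is currently draining \emph{below} level $q$, essentially certifying that the higher levels will soon be attacked. A natural candidate is a term involving the total remaining work $\srpt$ has accumulated below each level $q$, so that rapid progress at the lowest levels pays for the stagnation at the higher ones. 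Tuning the coefficient of this secondary term so that every case of the continuous inequality tightens simultaneously to give the clean $4/\eps$ constant is the technical crux, and is precisely the ``new potential function that captures the structure of $\srpt$'' advertised in the abstract.
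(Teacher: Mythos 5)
Your outline differs from the paper's proof in both framework and potential, and more importantly it stops short of a proof at exactly the point where the work is. You propose requiring $\Phi$ to have \emph{no upward jumps} at arrivals and completions and to satisfy a local rate bound $\dds(t)+\ddphi(t)\leq\frac{4}{\eps}\ddopt(t)$; the paper instead \emph{permits} upward jumps at discrete events, bounds their total by $O(1/\eps)\cdot\opt$ (via Lemma~\ref{lem:status} for arrivals and a separate charging argument for \srpt's completions), and shows the running condition satisfies the cleaner $\dds+\ddphi\leq 0$. This distinction matters: the paper explicitly sets out to avoid a local (rate-versus-$\ddopt$) comparison because it is unclear whether such a bound holds for \srpt\ on every input even with augmentation, so your condition (iv) is not obviously attainable. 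Your candidate potential, $(c/\eps)\int_0^\infty (n^S(q,t)-n^O(q,t))^+\,dq$, also differs in kind from the paper's: the paper uses a per-job sum $\Phi(t)=\frac{1}{m\eps}\sum_{J_i\in\qs}\bigl(\rs{i}+m\ps{i}-\vo{i}\bigr)$, where $\rs{i}$ is the remaining \srpt\ volume of jobs with \srpt-completion no later than $J_i$'s and $\vo{i}$ is \opt's remaining volume on (a subset of) those same jobs. This is keyed on \srpt's completion order, not on remaining-size levels, and cannot be rewritten as a level integral.

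You correctly identify the central obstacle --- on $m$ machines \srpt\ may be working entirely below a given level $q$ so a pure level-based potential gains nothing at $q$ --- but the ``way out'' you gesture at (a secondary global work term tuned so ``every case tightens simultaneously'') is exactly the missing content, and you do not supply it. In the paper this obstacle is resolved not by a global additive term but by the per-job structure of $\Phi$: the combination $\rs{i}+m\ps{i}$ guarantees that \emph{either} $J_i$ is being processed, so $m\ps{i}$ drops at rate $m(1+\eps)$, \emph{or} $m$ jobs with earlier \srpt-completion are being processed, so $\rs{i}$ drops at rate $m(1+\eps)$. That dichotomy is what your level-based formulation is missing. There is a further difficulty you have not engaged with: any secondary term involving total remaining work (e.g.\ $\sum_i\ps{i}$ weighted into the integrand) generically \emph{does} jump at arrivals and completions, which would violate your stipulated condition (iii); the paper accepts these jumps and bounds the completion jump $\frac{1}{m\eps}\vo{i}$ by a nontrivial charging scheme that pays $\frac{1+\eps}{m\eps}$ per unit of \srpt-work done on $J_i$ while a contributing job sits in \opt's queue. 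As written, your proposal is a plausible-sounding roadmap with the key lemma and the key potential both declared rather than constructed, so it does not constitute a proof of the theorem.
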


Unfortunately, algorithms which are optimal for average flow time can starve
individual
jobs of processing power for an arbitrary finite amount of time. For example,
suppose we are given a single machine. Jobs $J_1$ and $J_2$ arrive time $0$ and
at every unit time step another job arrives. All jobs have unit processing time.
Using average flow time as the objective, an optimal algorithm for this problem
instance is to schedule $J_1$ and then schedule jobs as they arrive, scheduling
$J_2$ after the last of the other jobs is completed. Although this algorithm is
optimal, it can be seen that the algorithm is not `fair' to job $J_2$.

Algorithms which are fair at the individual job level are desirable in practice
\cite{Tanenbaum07,SilberschatsG94}. In fact, algorithms that are competitive for
total flow time are sometimes not implemented due to the possibility of
unfairness \cite{BansalP03}. To overcome the disadvantage of algorithms that
merely optimize the average flow time, the objective of minimizing the
$\ell_k$-norms of flow time for small $k$ was suggested by Bansal and Pruhs
\cite{BansalP03,BansalP04}. This objective tries to balance overall performance
and fairness. Specifically, the $\ell_k$-norm objective minimizes $\left
(\sum_{i \in [n]} (C_i - r_i)^k \right )^{1/k}$. Notice that optimizing the
$\ell_1$-norm is equivalent to optimizing the average flow time. For the
$\ell_k$-norm objective when $k >1$, the previous example has one optimal
solution. This solution schedules jobs in the order they arrive, which can be
seen to be `fair' to each job.

For the $\ell_k$-norm objective it is known that every online deterministic
algorithm is $n^{\Omega(1)}$-competitive even on a single machine when $1 < k <
\infty$ \cite{BansalP03}. This is quite different from the $\ell_1$-norm where
$\srpt$ is an optimal algorithm. In the single machine setting, it was shown
that $\srpt$ is a scalable algorithm for the $\ell_k$-norm objective for all $k$
\cite{BansalP03}. The competitiveness of $\srpt$ in the multiple machine setting
was not known for the $\ell_k$-norms even when $\srpt$ is given any constant
amount of resource augmentation. The previously discussed algorithm that was
analyzed in \cite{ChekuriGKK04} was shown to be scalable for the problem of
minimizing the $\ell_k$-norms of flow time on identical machines for all $ k
>1$. It was suggested in \cite{PruhsST} that determining whether or not $\srpt$
is scalable for the $\ell_k$ norms of flow time on identical machines is another
interesting open question. In this paper we analyze $\srpt$ and show that it is
a scalable algorithm for the $\ell_k$-norm objective on multiple machines. This
shows that not only is $\srpt$ essentially the best possible algorithm for the
objective of average flow time in almost all aspects in the worst case model,
$\srpt$ will also balance the fairness of the schedule when given a small amount
of resource augmentation.

\begin{theorem}\label{thm:lk}
The algorithm $\srpt$ is $(1+\eps)$-speed
$\frac{4}{\eps^2}$-competitive for the $\ell_k$-norms of flow time on
$m$ identical parallel machines for $k \geq 1, 1/2 \geq \eps > 0$.
\end{theorem}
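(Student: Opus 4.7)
The plan is to adapt the amortized local competitiveness framework used for Theorem~\ref{thm:avg} to the $\ell_k$ setting. The key observation is that the $k$-th power of the $\ell_k$ objective decomposes as $\sum_i (C_i - r_i)^k = \int_0^\infty k \sum_{i \text{ alive}}(t - r_i)^{k-1}\,dt$, so it suffices to bound SRPT's instantaneous cost $k \sum_{i \in \qs}(t - r_i)^{k-1}$ by a constant times OPT's instantaneous cost plus $\ddphi$ for a well-chosen potential $\Phi$, and separately to control jumps at arrivals and completions. Integrating in time then gives a constant relation between the $\ell_k^k$ norms of the two schedules, which yields a competitive ratio of $(4/\eps^2)^{1/k} \le 4/\eps^2$ for the $\ell_k$-norm itself after taking $k$-th roots.

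I would define $\Phi(t)$ as a weighted analogue of the potential used for the average flow time bound. For each job $J_i$ alive under SRPT, attach an age-weight $w_i(t) = (t - r_i)^{k-1}$, and let $\Phi(t)$ be a sum, indexed either over jobs or over remaining-processing-time classes, of ``excess SRPT work compared to OPT'' multiplied by the corresponding age-weight. Boundary conditions are immediate: $\Phi(0) = 0$ and $\Phi$ vanishes once all jobs complete. Arrivals cause no increase when $k > 1$, because a new job has age $0$ and therefore weight $0$; the $k = 1$ case reduces to Theorem~\ref{thm:avg}. A job completing in either schedule only decreases $\Phi$, since the weights are nonnegative and the ``excess work'' quantity is nonincreasing at completions.

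The heart of the proof is the running condition
\[
k\sum_{i \in \qs}(t - r_i)^{k-1} + \ddphi \;\leq\; \frac{4}{\eps^2} \cdot k\sum_{i \in \qo}(t - r_i)^{k-1}.
\]
I expect the main obstacle to be twofold. First, the age-weights $w_i(t)$ themselves grow at rate $(k-1)(t - r_i)^{k-2}$, contributing an additional term to $\ddphi$ with no analogue in the proof of Theorem~\ref{thm:avg}; this term must be charged to OPT's instantaneous cost, which is only possible because every job alive in SRPT is also (in a suitable amortized sense) tracked against OPT's own queue of aged jobs. Second, SRPT orders jobs by remaining processing time while the objective penalizes jobs by age, so an old job with large remaining size contributes heavily to the $\ell_k$ cost even while SRPT is busy processing young jobs with small remaining size. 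To handle this mismatch I would decompose jobs by remaining-processing-time class and argue class-by-class that the age-weighted lag is controlled using the $(1+\eps)$-speed advantage; the two effects together explain the extra factor of $1/\eps$ relative to Theorem~\ref{thm:avg}. The restriction $\eps \leq 1/2$ likely enters when absorbing lower-order $(1+\eps)$-versus-$1$ error terms in the class-by-class bound during this step.
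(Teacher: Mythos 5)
Your proposal diverges from the paper's proof in two places where the proposal has real gaps.

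First, the claim that ``a job completing in either schedule only decreases~$\Phi$'' is false for the kind of potential you are describing. When~\srpt\ completes~$J_i$ at time~$t=\cs{i}$, we have~$\rs{i}=\ps{i}=0$ but~$\vo{i}$ can still be strictly positive, so the ``excess work'' quantity~$\rs{i}+m\ps{i}-\vo{i}$ is~$-\vo{i}\leq 0$ at that moment; removing the corresponding term therefore \emph{increases}~$\Phi$. Far from being a free step, this increase is the central technical obstruction in the~$\ell_k$ analysis. The paper handles it by bounding the total completion-time increase by~$\sum_i\Paren{\frac{1}{\eps^2}}^k\Paren{\frac{1}{m}\voc{i}}^k$ and then proving the dedicated charging argument Lemma~\ref{lem:l_k-charge}, which carefully telescopes charges onto~\opt's flow time per job using Lemma~\ref{lem:work-leq}. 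Your sketch skips this entirely.

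Second, your running condition is a local-competitiveness bound: you want the instantaneous change in~$\srpt+\Phi$ to be at most a constant times~\opt's instantaneous cost~$k\sum_{i\in\qo}(t-r_i)^{k-1}$. The paper explicitly flags in its related-work discussion that it is not known whether~\srpt\ satisfies such a local bound even with speedup, and deliberately avoids it; its potential is engineered so that the running condition is~$\dds+\ddphi\leq 0$, with all charging to~\opt\ pushed into the arrival and completion events. Relatedly, your proposed form~$(t-r_i)^{k-1}\cdot(\text{excess})$ produces an age-weight derivative term~$(k-1)(t-r_i)^{k-2}\cdot(\text{excess})$ with no counterpart to cancel it, which you acknowledge as the ``main obstacle'' but do not actually resolve. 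The paper instead folds the excess into the base before taking the~$k$th power, using~$\Paren{\max\Set{t-r_i+\frac{1}{m\eps}\Paren{\rs{i}+m\ps{i}-\vo{i}},0}}^k-(t-r_i)^k$; this form has a clean derivative that cancels against~$\dds$ without needing any instantaneous charge to~\opt. The class-by-class decomposition you gesture at is not developed, and there is no evidence it closes either gap, so the proposal as written does not constitute a proof.
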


To prove both of these results, we introduce novel potential functions that we
feel capture the structure of \srpt. \srpt\ is a natural algorithm to consider
in many other scheduling models where potential function analysis is commonly
found. We believe that the potential functions introduced here will be useful
for analyzing \srpt\ and similar algorithms in these other settings.\\

\noindent {\bf Related Work:} As mentioned, $\srpt$ is an optimal algorithm for
minimizing average flow time on a single machine. $\srpt$ was the first
algorithm to be analyzed in the worst case model when there are $m$ identical
machines. It was shown by Leonadi and Raz that $\srpt$ is $O(\min\{ \log P, \log
n/m \})$-competitive and there is a matching lower bound on any randomized
algorithm \cite{LeonardiR07}. A simpler analysis of $\srpt$ in the multiple
machine setting can be found in \cite{Leonardi03}. $\srpt$ is
$\left(2-\frac{1}{m} \right)$-speed $1$-competitive and $\srpt$ is the only
algorithm known to be $1$-competitive with any resource augmentation in the
multiple machine setting. Notice that $\srpt$ in the multiple machine setting
could schedule a job on one machine and then later schedule the job on another
machine. That is, $\srpt$ \emph{migrates} jobs between the machines. To
eliminate migration Awerbuch \etal introduced an algorithm that processes each
job on exactly one machine and showed that this algorithm is $O(\min\{\log P,
\log n\})$-competitive \cite{AwerbuchALR02}. A related algorithm was developed
by Chekuri, Khanna, and Zhu that does not migrate jobs and it was shown to be
$O(\min\{\log P, \log n/m\})$-competitive \cite{ChekuriKZ01}. Each of the
previously discussed algorithms hold the jobs in a central pool until they are
scheduled. Avrahami and Azar introduced an algorithm which does not hold jobs in
a central pool, but rather assigns a job to a unique machine as soon as the job
arrives \cite{AvrahamiA07}. They showed that their algorithm is $O(\min\{\log P,
\log n\})$-competitive. Chekuri \etal showed that the algorithm of Avrahami and
Azar is a scalable algorithm \cite{AvrahamiA07, ChekuriGKK04}. For the
$\ell_k$-norms of flow time Chekuri \etal also showed that the algorithm of
Avrahami and Azar is scalable \cite{ChekuriGKK04}.

The analysis in \cite{ChekuriGKK04}, which shows a scalable algorithm for
average flow time on multiple machines, uses a local competitiveness argument.
In a local argument, it is shown that at any time, the increase in the
algorithm's objective function is bounded by a constant factor of the optimal
solution's objective. From the lower bound given above, we know this property
does not hold when $\srpt$ is not given
resource augmentation. With resource augmentation, it is unclear
whether or not this can be shown for $\srpt$ on every input. In this paper, we
avoid a local analysis by using a potential function argument which we discuss
further in the following section.

\section{Preliminaries}
\label{sec:prelim}

Before giving our analysis, we introduce a fair bit of notation. Let $\qs$ and
$\qo$ be the set of jobs alive (released but unsatisfied) at time $t$ in \srpt's
and \opt's schedules, respectively. Let $\ws$ be the set of jobs scheduled for
processing at time $t$ in \srpt's schedule. Let $\ps{i}$ and $\po{i}$ be the
remaining processing times at time $t$ for job $J_i$ in \srpt's and \opt's
schedules, respectively.
Finally, let $\cs{i}$ and $\co{i}$ be the completion time of job $J_i$ in
\srpt's and \opt's schedules, respectively.

Throughout this paper, we will concentrate on bounding~\srpt's
\emph{$k$th power flow time},~$\sum_{i\in[n]}\Paren{\cs{i}-r_i}^k$,
as this is the $\ell_k$-norm of flow time without the outer root.
We will proceed to use $\srpt$ and $\opt$ as functions
of $t$ that return their respective algorithm's accumulated
$k$th power flow time.
In other words, $\srpt(t) = \sum_{i\in[n],t\geq r_i}(\min\Set{\cs{i},t} - r_i)^k$,
and~$\opt(t)$ is defined similarly. When
$\srpt$ or $\opt$ is used as a value without a time specified, it is assumed we
mean their final objective value.

For any job $J_i$ and time $t$, we let $\rs{i}$ be the total volume of work
remaining at time $t$ for every released job with completion time at most
$\cs{i}$ in \srpt's schedule. Precisely,
$$\rs{i}=\sum_{J_j \in \qs, \cs{j} \leq \cs{i}} \ps{j}.$$
We also define
$\vo{i}$ to be the volume of work in \opt's schedule at time $t$ for a subset of
those same jobs, except we only include those
jobs with original processing time at
most $p_i$. Precisely, $$\vo{i} = \sum_{J_j \in \qo, \cs{j} \leq \cs{i},
p_j \leq p_i} \po{j}.$$ We will assume without loss of generality that all
arrival and completion times are distinct by breaking ties arbitrarily but
consistently.

The following lemma will help us to characterize the current status of \srpt\
compared to \opt\ at any point in time. This is a modification of a lemma
given in \cite{MuthukrishnanRSG99, PruhsST}.

\begin{lemma}
\label{lem:status}
At any time $t \geq r_i$, for any sequence of requests $\sigma$,
and for any $ i \in
[n]$, it is the case that $\rs{i} - \vo{i} \leq m p_i$.
\end{lemma}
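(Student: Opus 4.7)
The plan is to exploit the following structural property of \srpt: for any job $J_j \neq J_i$ alive in \srpt\ at time $t \geq r_i$ with $C_j^S \leq C_i^S$, we have $p_j^S(t) \leq p_i^S(t) \leq p_i$. Since \srpt\ schedules the $m$ jobs with smallest remaining processing times, if ever $p_j^S(\tau) > p_i^S(\tau)$ at some $\tau$ with both jobs alive, then $J_i$ is processed whenever $J_j$ is; hence the gap $p_j^S - p_i^S$ is nondecreasing and $J_i$ must complete strictly before $J_j$, contradicting $C_j^S \leq C_i^S$. An immediate corollary is that any arrival $J_j$ at time $r_j \geq r_i$ with $C_j^S \leq C_i^S$ must have $p_j \leq p_i$, since otherwise $p_j^S(r_j) = p_j > p_i \geq p_i^S(r_j)$ would violate the observation.

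With this in hand, I would track the potential $\Psi(t) := R^S(i, t) - V^O(i, t)$ over time $t \geq r_i$. Completions produce no discrete jumps since $p_j^S$ and $p_j^O$ vanish at that instant. Arrivals with $C_j^S > C_i^S$ contribute to neither sum, while arrivals after $r_i$ with $C_j^S \leq C_i^S$ satisfy $p_j \leq p_i$ by the corollary above and therefore enter both sets with the same contribution, leaving $\Psi$ unchanged. Between events, $\frac{d\Psi}{dt} = b(t) - s\cdot a(t)$, where $a(t)$ (resp.~$b(t)$) counts the machines \srpt\ (resp.~\opt) devotes to jobs in the $R^S$-set (resp.~$V^O$-set).

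To establish $\Psi(t) \leq m p_i$, I would adapt the standard MRSG-style argument \cite{MuthukrishnanRSG99, PruhsST} to the restricted set $\{J_j : C_j^S \leq C_i^S\}$. The key observation makes the restriction clean: every job contributing to $R^S(i, t)$ has remaining work at most $p_i$, so the only ``source'' of excess in $R^S - V^O$ comes from at most $m$ jobs that \srpt\ is currently processing, each with remaining work at most $p_i$, yielding the $mp_i$ bound by a straightforward charging across \srpt's $m$ machines. I expect the main technical obstacle to be controlling the initial value $\Psi(r_i)$, because jobs released before $r_i$ with original size greater than $p_i$ may contribute to $R^S(i, r_i)$ without appearing in $V^O(i, r_i)$; bounding these pre-existing contributions requires combining the key observation ($p_j^S(r_i) \leq p_i$ for each such job) with a comparison of \srpt's and \opt's processing behavior prior to $r_i$, so that the ``excess'' of remaining work in these jobs is balanced against \opt's progress on the same set of jobs.
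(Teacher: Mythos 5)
Your key structural observation is exactly the idea underlying the paper's proof, and you are right that it implies every job contributing to $\rs{i}$ for $t\geq r_i$ has remaining time at most $p_i$. You are also right to track the quantity $\Psi(t)=\rs{i}-\vo{i}$ and to note that arrivals after $r_i$ leave it unchanged. However, the proposal stops short of an actual proof at exactly the step you flag as the ``main technical obstacle,'' and that obstacle is the whole content of the lemma.

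Two concrete gaps. First, the sentence ``the only source of excess in $R^S-V^O$ comes from at most $m$ jobs that \srpt\ is currently processing, each with remaining work at most $p_i$'' is not a valid derivation of the bound: $\Psi(t)$ is an accumulated quantity, not something controlled by the instantaneous state of \srpt's $m$ machines, and the $mp_i$ bound does not come from what \srpt\ is processing ``now.'' Second, you correctly observe that $\Psi(r_i)$ can be large because of jobs released before $r_i$ with original size exceeding $p_i$, but you never resolve it. The paper's resolution is not to compare pre-$r_i$ behavior of \srpt\ and \opt\ at all. Instead it introduces the surrogate $X(i,t)=\sum_{J_j\in\qs,\cs{j}\leq\cs{i},\ps{j}\leq p_i}\ps{j}$, which equals $\rs{i}$ for all $t\geq r_i$ by your observation but is well behaved for every $t$, and then picks $t'$ to be the last time before $t$ at which fewer than $m$ jobs contributed to $X$. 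At $t'$ one trivially has $X(i,t')\leq mp_i$ and $\vo[t']{i}\geq 0$; on $(t',t]$ at least $m$ contributing jobs are always alive, so \srpt\ drains $X$ at rate at least $m$, \opt\ drains $V^O$ at rate at most $m$, and arrivals increase both by the same amount, so the difference never increases. Without this anchoring time $t'$ (or an equivalent device) your drift argument has no valid starting point, and the ``comparison of \srpt's and \opt's processing behavior prior to $r_i$'' that you gesture at is precisely the thing that needs to be supplied.
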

\begin{proof}
Define $X(i,t)$ to be the sum of the remaining processing times in
\srpt's schedule at time $t$ for jobs with remaining processing time at most
$p_i$ while also contributing to $\rs{i}$. In other words,
$$X(i,t) = \sum_{J_j \in \qs, \cs{j} \leq \cs{i}, \ps{j} \leq p_i} \ps{j}.$$
Every job contributing to $\rs{i}$ must have remaining processing time at most
$p_i$ in order for \srpt\ to schedule it ahead of $J_i$, so we see
$X(i,t) = \rs{i}$ whenever $t \geq r_i$.
Thus is suffices to show that $X(i,t) - \vo{i} \leq m p_i$. If there are $m$ or
fewer jobs contributing to $X(i,t)$ at time $t$ in $\qs$
then the lemma follows easily. Now consider the case where there are more than
$m$ jobs contributing to $X(i,t)$.

Let $t' \geq 0$ be the earliest time before time $t$ such that $\srpt$ always
had at least $m$ jobs contributing to $X(i,t)$ during
$(t',t]$. We will show $X(i,t) - \vo{i} \leq mp_i$. Let $T = \sum_{r_j \in
(t',t], \cs{j} \leq \cs{i}, p_j \leq p_i} p_j $
be the total processing time of jobs that arrive
during $(t',t]$ that are completed by \srpt\ before $J_i$
and have original processing time at most $p_i$. It can be seen that
$X$ will increase by $T$ during $(t',t]$ due to the arrival of jobs. However,
$V^O$ will also increase by $T$ during $(t',t]$ by definition of $V^O$.

The only other change that occurs to $X$ and $V^O$ during $(t',t]$ is due to the
processing of jobs by the algorithm \srpt\ and \opt. Knowing that $\opt$ has $m$
machines of unit speed, $V^O$ can decrease by at most $m(t- t')$ during
$(t,t']$. We also know that during~$(t',t]$, there always exists at least~$m$
jobs with remaining processing time at most~$p_i$ unsatisfied by~\srpt\ that
will be completed by~\srpt\ before job~$J_i$.~\srpt\ always
works on the~$m$ available jobs with earliest completion time,
so this causes $X$ to decrease by at least
$m(t'-t)$ (this even assumes $\srpt$ is not given resource augmentation).
Combining these facts we have the following:

\begin{align*}
  X(i,t)-\vo{i} &\leq \Paren{X(i,t')+T-m(t'-t)} - \Paren{V^O(i,t')+T-m(t'-t)} \\
  &= X(i,t')-V^O(i,t') \\
  &\leq mp_i
\end{align*}

\end{proof}

\subsection{Potential Function Analysis}

For our proofs of the theorems, we will use a potential function argument
\cite{Edmonds00}. In each proof we will define a potential function $\Phi :
[0,\infty) \rightarrow \mathbb {R}$ such that $\Phi(0) = \Phi(\infty) = 0$. We
will proceed to bound discrete and continuous local changes to $\srpt + \Phi$.
These changes may come from the following sources:

\medskip
\noindent \emph{Job Arrival}: Arriving jobs will not affect $\srpt$ but they
will make a change to $\Phi$. The total increase in $\Phi$ over all jobs
arrivals will be bounded by $\delta\opt$ where $\delta$ is a non-negative
constant which may depend on $k$ and $\eps$.\\

\noindent \emph{Job Completion}: Again, job completions will not affect $\srpt$,
but they will make a change to $\Phi$. We will bound these increases by
$\gamma\opt$ where $\gamma$ is a non-negative constant which may depend on
$k$ and $\eps$.\\

\noindent \emph{Running Condition}: This essentially captures everything else.
We will show a bound on the continuous changes in $\srpt + \Phi$ due to the
change in time as well as the changes to each job's remaining processing time.
Surprisingly, we find $\dds + \ddphi
\leq 0$, meaning we can ignore the running condition in our final calculations. 
\medskip

Knowing that $\Phi(\infty)= \Phi(0) = 0$, we have that $\srpt = \srpt(\infty) +
\Phi(\infty)$. This is bounded by the total increase in the arrival and
completion conditions, thus we will have $\srpt \leq (\delta + \gamma) \opt$,
which will complete our analysis.

\section{Total Flow Time}
We consider any job sequence $\sigma$ and assume $\srpt$ is given $(1+\eps)$ speed where $\eps >0 $. We proceed by placing our focus on minimizing the total flow time.   To accomplish this, we will define a potential function with one term for each job being processed such that the following conditions are met:
\begin{itemize}
\item
  Job arrivals and completions do not increase the potential function beyond a
  strong lower bound on \opt.
\item
  Each term has a decreasing component that counteracts the gradual increases in
  \srpt's flow time.
\item
  There may be components of each term that increase, but we can easily bound
  these increases by the decreases from other components.
\end{itemize}

We use the following potential function based on the intuition given above:

$$\Phi(t) = \frac{1}{m \eps} \sum_{J_i \in \qs}
  \Paren{\rs{i} + m\ps{i} - \vo{i}}$$

Now, consider the different changes that occur to \srpt's
accumulated flow time as well as $\Phi$ for any job
sequence $
\sigma$.

\medskip \noindent \emph{Job Arrival:} The event of a job's arrival makes no
change to the accumulated flow time, 
but it can change $\Phi$. Consider the arrival of job $J_i$ at time $t = r_i$.
For any $j \neq i$ such that $J_j \in
\qs$, consider the term
$$\frac{1}{m \eps} \Paren{\rs{j} + m\ps{j} - \vo{j}}$$
in the potential function. The 
arrival of job $J_i$ will change both $\rs{j}$ and $\vo{j}$ equally (either by
$p_i$ or $0$ depending on if $p_j \leq
\ps{i}$) creating no net change in the 
potential function. We do gain a new term in the summation, but this can be
bounded as follows:

\begin{align*}
  &\frac{1}{m\eps} \Paren{\rs{i} + m p_i - \vo{i}} \\
  \leq {} &\frac{1}{m \eps} (2m p_i) \mbox{$\;\;$ By Lemma~\ref{lem:status}} \\
  = {} &\frac{2}{\eps}p_i
\end{align*}

We use the trivial lower bound of~$p_i$ on~$J_i$'s total flow time to see that
the total increase in~$\Phi$ from job arrivals is at most~$\frac{2}{\eps}\opt$.

\medskip \noindent \emph{Job Completion:} Same as above, job completions
make no change to the accumulated 
flow time. Consider the completion of a job $J_i$ by~\opt\ at time~$t = \co{i}$.
For any job~$J_j \in \qs$, the term
$$\frac{1}{m \eps} \Paren{\rs{j} + m\ps{j} - \vo{j}}$$
sees no change as~$J_i$ is already contributing nothing to~$\vo{j}$.

Likewise, consider the completion of job $J_i$ by \srpt\ at 
time $t = \cs{i}$. For any $j \neq i$ such that $J_j \in \qs$, the term
$$\frac{1}{m \eps} \Paren{\rs{j} + m\ps{j} - \vo{j}}$$
sees no change as~$J_i$ is already contributing nothing to~$\rs{j}$.
Unfortunately, we need a more sophisticated argument to bound
in the increase in~$\Phi$ from removing the term
$$\frac{1}{m \eps} \Paren{\rs{i} + m\ps{i} - \vo{i}}.$$

The increase from removing this term is precisely~$\frac{1}{m \eps}\vo{i}$,
because~\srpt\ has completed all jobs contributing to~$\rs{i}$ and~$\ps{i}=0$.
We can use the following scheme to charge this and similar increases to~$\opt$'s
total flow time.
Consider any job~$J_j$ contributing volume to~$\vo{i}$. We know that
if~$r_j < r_i$, we have~$p_j \leq p_i$ by definition of~$V^O$.
Further, if~$r_j \geq r_i$, we have~$p_j \leq p_i^S(r_j)$ by definition
of~$R^S$. In either case,~\srpt\ performs at least~$p_j$ units of work on
job~$J_i$ while~$J_j$ is sitting in~$\opt$'s queue, and this work occurs
over a period of at least~$p_j/(1 + \eps)$ time units. To pay for~$J_j$'s
contribution to~$\frac{1}{m \eps}\vo{i}$, we charge to~$J_j$'s increase
in flow time during this period at a rate of~$\frac{1+\eps}{m \eps}$.

The total charge accrued during this period due to~$J_j$
is at least~$\frac{1+\eps}{m\eps}\frac{p_j}{1+\eps} = \frac{p_j}{m \eps}$.
Summing over
all jobs contributing to~$\vo{i}$, we see that we charge enough. Now we need
to bound our total charge. Observe that any one of these charges to a
job~$J_j$ accrues at~$\frac{1+\eps}{m\eps}$ times
the rate that~$J_j$ is accumulating
flow time. Further,~\srpt\ is working on at most~$m$ jobs at any point in time,
so our combined charges are accruing at~$\frac{1+\eps}{\eps}$ times the rate
that~$J_j$ is accumulating flow time. By summing over all time and jobs, we
conclude that we charge at most~$\frac{1+\eps}{\eps}\opt$, giving us an upper
bound on~$\Phi's$ increase due to~\srpt's job completions.

\medskip \noindent \emph{Running Condition:} We now proceed to show a bound on
$\dds + \ddphi$ at an
arbitrary time $t$ ignoring the arrival and completion of jobs. 
First, note that $$\dds = \sum_{J_i \in \qs} 1.$$

To bound $\ddphi$, we fix some $i$ such that $J_i \in \qs$ and consider $J_i$'s
term in $\Phi$'s summation.

We begin by considering the change due to $\vo{i}$. $\opt$ can only process $m$
jobs at a time, so the $i$th term of $\Phi$ changes at a rate of at most
$$\frac{1}{m\eps} m = \frac{1}{\eps} .$$

Finally, we consider the change due to both $\rs{i}$ and $m \ps{i}$ together and
derive a lower bound on their combined decrease. Neither term can increase, so
we accomplish this by finding a lower bound on the decrease of one or the other.
Suppose~\srpt\ is processing job $J_i$ (using
$(1 + \eps)$ 
speed) at time $t$. If this is the case, $m \ps{i}$ decreases at a rate
of $m(1 + \eps)$. If job $J_i$ is not being processed, then 
there are $m$ other jobs in $\ws$ being processed instead. 
By definition, these jobs are contributing their volume to~$\rs{i}$,
and we see it decreases at a rate of $m(1 + \eps)$. 
Considering both terms together, we find an
upper bound for their contribution to
$\Phi$'s rate of change which is
$$\frac{1}{m \eps} \Paren{-m(1+\eps)} = -\frac{1}{\eps} - 1.$$

By summing over the above rates of change, we see everything
cancels out to $0$.
Summing over all jobs gives us $\dds(t) +
\ddphi(t) \leq 0$.
Integrating the left hand side over all time, we
see $\srpt$ and $\Phi$ together \emph{do not increase}
if we only consider events other than the arrival and completion of jobs.

\medskip \noindent \textbf{Final Analysis:} Using the framework described in
Section \ref{sec:prelim} and the above analysis, we
see~$\srpt~\leq~\frac{4}{\eps}\opt$. This concludes the proof of
Theorem~\ref{thm:avg}. \qed


\section{$\ell_k$-Norms of Flow Time}
In this section we focus on minimizing the $\ell_k$-norms of flow
time.  Consider any job sequence $\sigma$ and assume that $\srpt$ is given $(1+\eps)$-speed where~$1/2\geq\eps>0$. 
We use a somewhat different potential function that includes extra components
meant to reflect the increasing speed at which alive jobs contribute to $k$th
power flow time. We use the following potential function to directly
bound~\srpt's~$k$th power flow time:

$$\Phi(t) = \frac{1}{(1-\eps)^k} \sum_{J_i \in \qs}
  \Paren{\max \Set{t - r_i + \frac{1}{m\eps}\Paren{\rs{i}+m\ps{i}-\vo{i}},0}}^k
  - \sum_{J_i \in \qs} (t - r_i)^k$$

Consider any job sequence $\sigma$.

\medskip \noindent \emph{Job Arrival:} Consider the arrival of job $J_i$ at time
$t = r_i$. Again, no change occurs
to the objective function. Also, as in the case for standard flow time, no
change will occur to the $J_j$th term of the
potential function for any $j \neq i$. 
However, a new term is added to the 
summation in the potential function. The increase in $\Phi$ due to this new term
is at most

\begin{align*}
  &\frac{1}{(1-\eps)^k}
    \Paren{\frac{1}{m\eps}\Paren{\rs{i} + m p_i - \vo{i}}}^k \\
  \leq {} &\frac{1}{(1 - \eps)^k} \Paren{\frac{1}{m\eps}(2m p_i)}^k
    \mbox{$\;\;$ By Lemma~\ref{lem:status}} \\
  \leq {} &\Paren{\frac{2}{\eps(1-\eps)}}^k (p_i)^k.
\end{align*}

The value~$(p_i)^k$ is a trivial lower bound on~$J_i$'s~$k$th power flow time,
so we can bound the total increase in~$\Phi$ due to job arrivals
by~$\Paren{\frac{2}{\eps(1-\eps)}}^k \opt$.

\medskip \noindent \emph{Job Completion:} Again, the only effect of job
completion we are concerned with is the increase of each job~$J_i$'s term
in~$\Phi$ when~\srpt\
completes~$J_i$ at time~$t=\cs{i}$. The increase from this occurrence is
$$(t-r_i)^k - \frac{1}{(1-\eps)^k}
  \Paren{\max\Set{t-r_i+\frac{1}{m\eps}\vo{i},0}}^k$$
We will use the following lemmas.

\begin{lemma}
  For any job~$J_i \in \qs$, if~$\vo{i} \leq m\eps^2(t-r_i)$ then
  $$(t-r_i)^k - \frac{1}{(1-\eps)^k}
    \Paren{\max\Set{t-r_i+\frac{1}{m\eps}\vo{i},0}}^k \leq 0.$$
\end{lemma}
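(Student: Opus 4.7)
The plan is to treat this as a direct algebraic inequality driven by the hypothesis. Writing $\Delta = t - r_i \geq 0$, I would use the bound $\vo{i} \leq m\eps^2 \Delta$ to show that the argument of the max collapses to a nonnegative expression bounded below by $(1-\eps)\Delta$, after which the inequality falls out of monotonicity of $x \mapsto x^k$.

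First, I would note that the hypothesis immediately gives $\frac{1}{m\eps}\vo{i} \leq \eps\,\Delta$. Based on the completion-condition derivation preceding the lemma, the expression inside the max is naturally $\Delta - \frac{1}{m\eps}\vo{i}$ (since at $t = \cs{i}$ we have $\rs{i} = \ps{i} = 0$, so that $\rs{i} + m\ps{i} - \vo{i} = -\vo{i}$). The bound then yields
$$\Delta - \tfrac{1}{m\eps}\vo{i} \;\geq\; (1-\eps)\Delta \;\geq\; 0,$$
where nonnegativity uses $\eps \leq 1/2 < 1$. Hence the max resolves to its first argument, and that argument is at least $(1-\eps)\Delta$. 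Raising to the $k$-th power (valid because both sides are nonnegative) and dividing by $(1-\eps)^k$ gives
$$\frac{1}{(1-\eps)^k} \Paren{\max\Set{\Delta - \tfrac{1}{m\eps}\vo{i},\, 0}}^k \;\geq\; \frac{\bigl((1-\eps)\Delta\bigr)^k}{(1-\eps)^k} \;=\; \Delta^k,$$
which rearranges directly to the claimed inequality.

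I do not expect any substantial obstacle. The constant $\eps^2$ in the hypothesis has been calibrated precisely so that the $\eps$ factor it produces inside the max cancels exactly one power of $(1-\eps)$ in the prefactor, making the inequality tight up to the cancellation. The only minor technical subtlety is verifying that the quantity inside the max is nonnegative so that the max can be discarded, which is where the assumption $\eps < 1$ enters.
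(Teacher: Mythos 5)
Your proof is correct and matches the paper's own argument step for step: both use the hypothesis to conclude that the argument of the max is at least $(1-\eps)(t-r_i) \geq 0$, then invoke monotonicity of $x \mapsto x^k$ to cancel the $(1-\eps)^k$ prefactor. You also correctly read past the sign typo in the displayed statement (and in the paper's own intermediate display): at $t = \cs{i}$ we have $\rs{i}=\ps{i}=0$, so the term inside the max must be $t-r_i-\frac{1}{m\eps}\vo{i}$, exactly as you note; the paper's own proof line $t-r_i+\frac{1}{m\eps}\Paren{\rs{i}+m\ps{i}-\vo{i}} \geq (1-\eps)(t-r_i)$ confirms that this is the intended expression.
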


\begin{proof}
  Note that hypothesis cannot apply when
  $t-r_i+\frac{1}{m\eps}\Paren{\rs{i}+m\ps{i}-\vo{i}} < 0$. This is because
  $$t-r_i+\frac{1}{m\eps}\Paren{\rs{i}+m\ps{i}-\vo{i}} \geq (1-\eps)(t-r_i)$$
  which is non-negative for all $t \geq r_i, \eps \leq 1$.
  Given the assumption that~$\vo{i} \leq m\eps^2(t-r_i)$, we have
  \begin{align*}
    &(t-r_i)^k-\frac{1}{(1-\eps)^k}\Paren{t-r_i+\frac{1}{m\eps}\vo{i}}^k \\
    \leq{}&(t-r_i)^k-\frac{1}{(1-\eps)^k}\Paren{(1-\eps)(t-r_i)}^k \\
    ={}&0.
  \end{align*}
\end{proof}

\begin{lemma}
  For any job~$J_i \in \qs$, if~$\vo{i} > m\eps^2(t-r_i)$ then
  $$(t-r_i)^k - \frac{1}{(1-\eps)^k}
    \Paren{\max\Set{t-r_i+\frac{1}{m\eps}\vo{i},0}}^k
    \leq \Paren{\frac{1}{\eps^2}}^k\Paren{\frac{1}{m}\vo{i}}^k.$$
\end{lemma}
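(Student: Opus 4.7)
The plan is to bound the left-hand side in two quick steps, discarding the subtracted max-term as slack and then applying the hypothesis directly.

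First, regardless of the sign of $t-r_i + \frac{1}{m\eps}\vo{i}$, the term
$\frac{1}{(1-\eps)^k}\Paren{\max\Set{t-r_i+\frac{1}{m\eps}\vo{i},0}}^k$
is non-negative, so dropping it only weakens the inequality. This reduces the lemma to showing
$(t-r_i)^k \leq \Paren{\frac{1}{\eps^2}}^k\Paren{\frac{1}{m}\vo{i}}^k$.

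Second, I would rearrange the hypothesis $\vo{i} > m\eps^2(t-r_i)$ to obtain $t - r_i < \frac{1}{\eps^2}\cdot\frac{1}{m}\vo{i}$. Since both sides are non-negative and $k \geq 1$, raising to the $k$-th power preserves the inequality, producing exactly the required bound.

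There is no real obstacle: once the subtracted term is dropped, the lemma is essentially just the hypothesis raised to the $k$-th power. The interesting design choice lies elsewhere. This lemma is complementary to the preceding one, which handled the regime $\vo{i} \leq m\eps^2(t-r_i)$ by showing that the increase at completion is non-positive. The threshold $m\eps^2(t-r_i)$ is chosen precisely so that in the complementary regime treated here, the crude estimate $(t-r_i)^k$ is already no larger than $\Paren{\frac{1}{m\eps^2}\vo{i}}^k$. The substantive work remaining (after this lemma) will be to charge the bound $\Paren{\frac{1}{\eps^2}}^k\Paren{\frac{1}{m}\vo{i}}^k$ against $\opt$'s $k$th-power flow time, in analogy with the single-job completion charging argument used in the total flow time analysis.
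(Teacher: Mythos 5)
Your proof is correct and takes essentially the same approach as the paper: discard the non-negative subtracted term, then apply the hypothesis $\vo{i} > m\eps^2(t-r_i)$ and raise to the $k$-th power.
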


\begin{proof}
  We will ignore the negative term from the expression. Given the assumption
  that $\vo{i}~>~m\eps^2\Paren{t-r_i}$, we have
  \begin{align*}
    (t-r_i)^k &\leq \Paren{\frac{1}{m\eps^2}\vo{i}}^k \\
    &= \Paren{\frac{1}{\eps^2}}^k\Paren{\frac{1}{m}\vo{i}}^k.
  \end{align*}
\end{proof}

Based on these lemmas, we see the total increase to~$\Phi$ from job completions
is bounded
by
$$\sum_{i\in[n]}\Paren{\frac{1}{\eps^2}}^k\Paren{\frac{1}{m}\voc{i}}^k.$$
The following lemma, which we will prove later, implies that this bound is at
most~$\Paren{\frac{1+\eps}{\eps^2}}^k\opt$.

\begin{lemma}
\label{lem:l_k-charge}
  We have
  $$\sum_{i\in[n]}\Paren{\frac{1}{m}\voc{i}}^k \leq (1+\eps)^k\opt.$$
\end{lemma}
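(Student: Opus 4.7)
I plan to prove the stronger statement $\sum_i (\voc{i}/m)^k \leq (1+\eps)\opt$, which implies the lemma since $(1+\eps)^k \geq 1+\eps$ whenever $k \geq 1$. Throughout, let $S_i$ denote the set of jobs $J_j$ contributing to $\voc{i}$---those with $\cs{j} \leq \cs{i}$, $p_j \leq p_i$, and $\co{j} > \cs{i}$.

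\textbf{Layer cake plus swap.} Fix $i$ and order $S_i = \{j_1,\ldots,j_{n_i}\}$ so that $\co{j_1} \leq \cdots \leq \co{j_{n_i}}$. Since \opt's $m$ machines run at unit speed, they cannot complete $j_1,\ldots,j_l$ before time $\cs{i} + (1/m)\sum_{l' \leq l} p_{j_{l'}}^O(\cs{i})$, giving $\sum_{l' \leq l} p_{j_{l'}}^O(\cs{i}) \leq m(\co{j_l} - \cs{i})$. Applying the elementary inequality $\Paren{\sum_l x_l}^k \leq k\sum_l x_l \Paren{\sum_{l' \leq l} x_{l'}}^{k-1}$ (valid for $k \geq 1$, since $y^{k-1}$ is non-decreasing on $[0,\infty)$) with $x_l = p_{j_l}^O(\cs{i})/m$ yields $(\voc{i}/m)^k \leq (k/m)\sum_{J_j \in S_i} p_j^O(\cs{i})(\co{j}-\cs{i})^{k-1}$. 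Summing over $i$ and swapping the order of summation,
$$\sum_i (\voc{i}/m)^k \leq \frac{k}{m}\sum_j \sum_{i : J_j \in S_i} p_j^O(\cs{i})(\co{j}-\cs{i})^{k-1}.$$

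\textbf{Charging and integration.} For each $j$ I bound the inner sum by reusing the window argument from the proof of Theorem~\ref{thm:avg}: for every $i$ with $J_j \in S_i$, there is a time set $W_{i,j} \subseteq [r_j, \co{j}]$ of wall-clock duration at least $p_j/(1+\eps)$ and ending no later than $\cs{i}$, during which \srpt\ performs at least $p_j$ units of work on $J_i$ while $J_j$ sits in \opt's queue. Smearing the charge $p_j^O(\cs{i})(\co{j}-\cs{i})^{k-1}$ uniformly across $W_{i,j}$ gives an instantaneous rate at any $t \in W_{i,j}$ of at most $(1+\eps)p_j^O(\cs{i})(\co{j}-\cs{i})^{k-1}/p_j \leq (1+\eps)(\co{j}-t)^{k-1}$, using $p_j^O(\cs{i}) \leq p_j$ and $t \leq \cs{i}$. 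Since at most $m$ of the sets $W_{i,j}$ can cover a fixed $t$ (\srpt\ processes at most $m$ jobs at once), the aggregate charging rate directed at $J_j$ is at most $m(1+\eps)(\co{j}-t)^{k-1}$; integrating over $[r_j,\co{j}]$ gives $\sum_{i:J_j \in S_i} p_j^O(\cs{i})(\co{j}-\cs{i})^{k-1} \leq m(1+\eps)(\co{j}-r_j)^k/k$. Summing over $j$ and multiplying through by $k/m$ delivers the claim.

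\textbf{Main obstacle.} The delicate step is rigorously establishing the windows $W_{i,j}$ with the required properties---that \srpt\ does at least $p_j$ work on $J_i$ during $W_{i,j} \subseteq [r_j, \co{j}]$, ending by $\cs{i}$. This requires replaying the case analysis on $r_j < r_i$ versus $r_j \geq r_i$ from the proof of Theorem~\ref{thm:avg} and invoking Lemma~\ref{lem:status} together with the definitions of $R^S$ and $V^O$; once these windows are in hand, the remaining calculations are routine and independent of $k$, which is why the $(1+\eps)$ factor survives the power-$k$ layer-cake decomposition without further loss.
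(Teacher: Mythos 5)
Your argument is correct and takes a genuinely different route from the paper's. The paper decomposes $(\voc{i}/m)^k$ as a discrete telescoping sum over $J_j\in\sjobs{i}$ ordered by \emph{arrival time}; it then invokes Lemma~\ref{lem:work-leq} and the convexity of $x^k$ to turn each difference into one involving $(1+\eps)(\co{j}-r_j)$, whereupon the per-job sum telescopes again to $\Paren{(1+\eps)(\co{j}-r_j)}^k$. You instead order the contributions by \emph{\opt's completion time}, apply the layer-cake inequality $(\sum_l x_l)^k\leq k\sum_l x_l(\sum_{l'\leq l}x_{l'})^{k-1}$ together with the capacity bound $\sum_{l'\leq l}p_{j_{l'}}^O(\cs{i})\leq m(\co{j_l}-\cs{i})$, and replace the second telescoping with a smeared-charge integral over $[r_j,\co{j}]$. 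Both proofs rest on the same window observation (already used in the job-completion case of Theorem~\ref{thm:avg}): for $J_j\in\sjobs{i}$, \srpt\ does at least $p_j$ work on $J_i$ over a time set contained in $[r_j,\cs{i}]\subseteq[r_j,\co{j}]$ of duration at least $p_j/(1+\eps)$, and at most $m$ such windows (over varying $i$) can overlap at any instant. You are right that pinning down these windows is the crux and needs the $r_j<r_i$ vs.\ $r_j\geq r_i$ case split, though Lemma~\ref{lem:status} is not actually needed there---only the definitions of $R^S$, $V^O$, and SRPT's priority rule enter. A genuine dividend of your route is that it bypasses Lemma~\ref{lem:work-leq} and proves the sharper bound $\sum_i(\voc{i}/m)^k\leq(1+\eps)\opt$, which improves the completion-condition constant from $\Paren{(1+\eps)/\eps^2}^k$ to $(1+\eps)/\eps^{2k}$, a modest gain that leaves the asymptotics of Theorem~\ref{thm:lk} unchanged.
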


\medskip \noindent \emph{Running Condition:} We now ignore the arrival and
completion of jobs and consider the change in the $k$th power flow time as well
as $\Phi$ due to other events. Consider any time $t$. Note that
$$\dds(t) = \sum_{J_i \in \qs} k \cdot (t - r_i)^{k-1}. $$

Now, fix some $i$ such that $J_i \in \qs$. We will examine the
contribution of the
$J_i$th term to $\dds$. We will begin by assuming $t - r_i + \frac{1}{m \eps}
\Paren{\rs{i} + m \ps{i} - \vo{i}} > 0$ and consider the other case later.

First, consider how the change in $t$ affects this term while keeping the
dependent variables fixed. The rate of change is at most
$$ \frac{k}{(1-\eps)^k}
  \Paren{t - r_i + \frac{1}{m\eps}\Paren{\rs{i}+m\ps{i}-\vo{i}}}^{k-1}
  - k(t-r_i)^{k-1}.$$

Next we consider the change due to $\vo{i}$. In the worst case, \opt\ works on
$m$ jobs at time $t$ so the rate of increase in $\Phi$ due to the change in
$\vo{i}$ is at most

\begin{align*}
  & \frac{km}{m\eps}\frac{1}{(1-\eps)^k}\Paren{t-r_i+\frac{1}{m\eps}
    \Paren{\rs{i}+m\ps{i}-\vo{i}}}^{k-1} \\
  = {} & \frac{k}{\eps(1-\eps)^k}\Paren{t-r_i+\frac{1}{m\eps}
    \Paren{\rs{i}+m\ps{i}-\vo{i}}}^{k-1}
\end{align*}

Now consider the change in $\Phi$ due to $\rs{i} + m\ps{i}$. As in the
average flow time argument, this sum decreases at a rate of at least $(1 +
\eps)m$, so these terms cause $\Phi$ to change at a rate of at most

\begin{align*}
  &-\frac{k(1+\eps)m}{m\eps}\frac{1}{(1-\eps)^k}
    \Paren{t-r_i+\frac{1}{m\eps}\Paren{\rs{i}+m\ps{i}-\vo{i}}}^{k-1} \\
  ={}&-\frac{k}{\eps(1-\eps)^k}
    \Paren{t-r_i+\frac{1}{m\eps}\Paren{\rs{i}+m\ps{i}-\vo{i}}}^{k-1} \\
  &\quad-\frac{k}{(1-\eps)^k}
    \Paren{t-r_i+\frac{1}{m\eps}\Paren{\rs{i}+m\ps{i}-\vo{i}}}^{k-1}.
\end{align*}

Summing over the above terms shows that $J_i$ contributes at most~$0$ to
$\srpt+\Phi$'s rate of change.

We have yet to consider the case when $t - r_i + \frac{1}{m \eps} \Paren{\rs{i}
+ m \ps{i} - \vo{i}} \leq 0$. The above arguments concerning
the running condition and job arrivals show this term to be
non-increasing. Further, we see that once the expression
$$\max \Set{t-r_i+\frac{1}{m\eps}\Paren{\rs{i}+m\ps{i}-\vo{i}},0}$$
hits~$0$, it will never leave that value.

We now consider the various sources of change in $\Phi$'s $i$th term when
the above expression equals~$0$ by simply plugging $0$ into
the above inequalities. Changes in $t$ contribute at
most~$-k(t-r_i)^{k-1}$. Also, changes in $\rs{i}$, $m
\ps{i}$, and $\vo{i}$ have no effect. Summing, we still get~$0$ as an upper
bound on~$J_i$'s contribution to~$\srpt+\Phi$'s rate of change. Summing over
all jobs and integrating over time,
we see this bound holds for the running condition's
overall contribution to~$\srpt+\Phi$.

\medskip \noindent {\bf Final Analysis:} Using
the framework discussed in
Section~\ref{sec:prelim}  
and the arrival, completion, and running conditions shown in this section, we
have that
$$\srpt \leq \Paren{\Paren{\frac{2}{\eps(1-\eps)}}^k
  + \Paren{\frac{1+\eps}{\eps^2}}^k}\opt.$$
By taking the outer $k$th
root of the $\ell_k$-norm flow time and assuming $\eps < 1/2$, we
derive Theorem~\ref{thm:lk}. \qed

\section{Proof of Lemma~\ref{lem:l_k-charge}}

In this section, we prove Lemma~\ref{lem:l_k-charge}. Namely, if~\srpt\ is
running~$m$ machines of speed~$(1+\eps)$ while~\opt\ is running~$m$ machines
of unit speed, we have
$$\sum_{i\in[n]}\Paren{\frac{1}{m}\voc{i}}^k \leq (1+\eps)^k \opt$$
for the metric of $k$th power flow time. We will use a charging scheme to prove
the lemma.

Fix some job~$J_i$ and let~$\sjobs{i}$ denote the set of jobs that contribute
to~$\voc{i}$. We charge the following to each~$J_j \in \sjobs{i}$:

$$\Paren{\frac{1}{m}
  \Paren{\voc{i}-\sum_{J_a\in\sjobs{i},r_a<r_j}p_a^O(\cs{i})}}^k
  -\Paren{\frac{1}{m}
  \Paren{\voc{i}-p_j^O(\cs{i})-\sum_{J_a\in\sjobs{i},r_a<r_j}p_a^O(\cs{i})}}^k
$$

By considering the jobs in~$\sjobs{i}$ in order of increasing arrival time,
we see the charges form a telescoping sum that evaluates to

$$\Paren{\frac{1}{m}\voc{i}}^k
  - \Paren{\frac{1}{m}\Paren{\voc{i}-\sum_{J_j\in\sjobs{i}}p_j^O(\cs{i})}}^k
  = \Paren{\frac{1}{m}\voc{i}}^k.$$

Now our goal is to show that we charge at most~$(1+\eps)^k(\co{j}-r_j)^k$
in total to any job~$J_j$. Let $\tjobs{j}~=~\Set{J_i \mid J_j\in\sjobs{i}}$,
the set of jobs whose completion causes us to charge some amount to~$J_j$.
Consider the charge on~$J_j$ due to the completion of~$J_i\in\tjobs{j}$.

\begin{lemma}
\label{lem:work-leq}
  We have
  $$\frac{1}{(1+\eps)m}
    \Paren{\voc{i}-\sum_{J_a\in\sjobs{i},r_a<r_j}p_a^O(\cs{i})}
    \leq \co{j}-r_j-\frac{1}{(1+\eps)m}
    \sum_{J_a\in\tjobs{j},\cs{a}>\cs{i}}p_j^O(\co{a}).$$
\end{lemma}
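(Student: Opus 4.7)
My plan is to multiply both sides of the claimed inequality by $(1+\eps)m$ and use the identity $\voc{i}-\sum_{J_a\in\sjobs{i},r_a<r_j}p_a^O(\cs{i})=\sum_{J_a\in\sjobs{i},r_a\geq r_j}p_a^O(\cs{i})$ to rewrite it in the equivalent form
\[
  \sum_{J_a\in\sjobs{i},\,r_a\geq r_j}p_a^O(\cs{i})
  \;+\;\sum_{J_a\in\tjobs{j},\,\cs{a}>\cs{i}}p_j^O(\co{a})
  \;\leq\;(1+\eps)m(\co{j}-r_j).
\]
The right-hand side is exactly the total volume of work \srpt\ processes across its $m$ machines of speed $(1+\eps)$ during $[r_j,\co{j}]$. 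So it suffices to charge each term on the left to the \srpt\ work performed on a distinct, disjoint job during this interval.

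To do that, I would define the candidate sets $\mathcal{A}_1=\{J_a\in\sjobs{i}:r_a\geq r_j\}$ and $\mathcal{A}_2=\{J_a\in\tjobs{j}:\cs{a}>\cs{i}\}$. These are disjoint: members of $\mathcal{A}_1$ satisfy $\cs{a}\leq\cs{i}$ while members of $\mathcal{A}_2$ satisfy $\cs{a}>\cs{i}$. For $J_a\in\mathcal{A}_1$, \srpt\ processes $J_a$ entirely within $[r_a,\cs{a}]\subseteq[r_j,\co{j}]$, so it does $p_a\geq p_a^O(\cs{i})$ units of work on $J_a$ there. For $J_a\in\mathcal{A}_2$, I would reuse the very same charging argument the paper gives in the Total Flow Time section: from $J_j\in\sjobs{a}$ we have $\cs{j}\leq\cs{a}$, $p_j\leq p_a$, and $r_j\leq\cs{a}<\co{j}$, and splitting on whether $r_j<r_a$ or $r_j\geq r_a$ will show that \srpt\ performs at least $p_j$ units of work on $J_a$ during $[\max(r_j,r_a),\cs{a}]\subseteq[r_j,\co{j}]$. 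Combining with $p_j\geq p_j^O(\co{a})$ handles $\mathcal{A}_2$, and summing over $\mathcal{A}_1\cup\mathcal{A}_2$ together with disjointness yields the rewritten inequality.

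The main obstacle I anticipate is the $r_j\geq r_a$ sub-case for $\mathcal{A}_2$, where I need $p_j\leq p_a^S(r_j)$ to conclude that \srpt\ can spend $p_j$ units on $J_a$ during $[r_j,\cs{a}]$. This rests on \srpt's size-preservation invariant: if $J_j$ arrived at time $r_j$ with original size larger than $J_a$'s current remaining size, then \srpt\ would process $J_a$ ahead of $J_j$ and finish $J_a$ first, contradicting $\cs{j}\leq\cs{a}$. Once this invariant is in place, the rest reduces to a routine accounting of disjoint \srpt\ work across $[r_j,\co{j}]$ and the proof concludes.
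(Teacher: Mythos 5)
Your proposal is correct and matches the paper's argument: rewriting the inequality as a work-volume bound and charging disjointly to the $\sjobs{i}$-jobs arriving after $r_j$ (your $\mathcal{A}_1$, the paper's ``additional volume'' stage) and to the $\tjobs{j}$-jobs completing after $\cs{i}$ (your $\mathcal{A}_2$, the paper's first stage) is exactly the two-stage accounting in the paper's proof, with your disjointness observation corresponding to the paper's remark that no $J_a$ with $\cs{a}>\cs{i}$ contributes to $\voc{i}$. The size-preservation invariant you flag ($p_j \leq p_a^S(r_j)$ when $r_j \geq r_a$) is indeed the same fact the paper invokes from its total-flow-time completion argument.
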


\begin{proof}
  We will account for work done by~\srpt\ during~$[r_j,\co{j}]$ in two stages
  and use the result to derive the inequality. First, consider any
  job~$J_a \in \tjobs{j}$ with~$\cs{a}>\cs{i}$.
  We know~\srpt\ gave higher priority to~$J_j$
  than~$J_a$, because~$J_j$ is included in~$\voc{a}$. As seen in the completion
  condition arguments for total flow time, we know~\srpt\ did~$p_j$ volume of
  work on job~$J_a$ during~$[r_j,\cs{a}]$.
  Namely, we have~$p_j\leq p_a$ when~$r_j < r_a$
  and~$p_j\leq p_a^S(r_j)$ when~$r_j \geq r_a$ by definition of~$\voc{i}$.
  Therefore, we have at least
  $\sum_{J_a\in\tjobs{j},\cs{a}>\cs{i}}p_j
  \geq \sum_{J_a\in\tjobs,\cs{a}>\cs{i}}p_j^O(\co{a})$ volume of work
  done by~\srpt\ during~$[r_j,\co{j}]$.

  Next, we note that an additional
  $\voc{i}-\sum_{J_a\in\sjobs{i},r_a<r_j}p_a^O(\cs{i})$
  volume of work must be completed by~\srpt\ during~$[r_j,\co{j}]$. This is
  because~\srpt\ completed the jobs being counted in the above expression by
  time~$\cs{i}\leq\co{j}$ and these jobs arrived after time~$r_j$. Further,
  we are not counting the work in the above paragraph a second time,
  because no job~$J_a$ with~$\cs{a}>\cs{i}$ can count toward~$\rsc{i}$
  or~$\voc{i}$ by definition of $R^S$ and $V^O$.

  We know~\srpt\ has~$m$ machines of speed~$1+\eps$, so the soonest~\srpt\ can
  complete the above mentioned work is
  $$r_j+\frac{1}{(1+\eps)m}\Paren{\sum_{J_a\in\tjobs,\cs{a}>\cs{i}}p_j^O(\co{a})
  +\voc{i}-\sum_{J_a\in\sjobs{i},r_a<r_j}p_a^O(\cs{i})}.$$
  This expression is at most~$\co{j}$. The lemma follows by simple algebra.
\end{proof}

Now we are ready to prove a bound on the amount charged to~$J_j$. The total
amount charged is
$$\sum_{J_i\in\tjobs{j}}\Brack{
  \Paren{\frac{1}{m}
  \Paren{\voc{i}-\sum_{J_a\in\sjobs{i},r_a<r_j}p_a^O(\cs{i})}}^k
  -\Paren{\frac{1}{m}
  \Paren{\voc{i}-p_j^O(\cs{i})-\sum_{J_a\in\sjobs{i},r_a<r_j}p_a^O(\cs{i})}}^k
  }.$$
Using Lemma~\ref{lem:work-leq} and the convexity of~$x^k$ 
for~$k\geq1$ (where $x$ is any positive number), we can upper bound
this by
\begin{align*}
  &\sum_{J_i\in\tjobs{j}}\bigg[
    \Paren{(1+\eps)(\co{j}-r_j)-
    \frac{1}{m}\sum_{J_a\in\tjobs{j},\cs{a}>\cs{i}}p_j^O(\co{a})}^k \\
  &\quad\quad-\Paren{(1+\eps)(\co{j}-r_j)-\frac{1}{m}\Paren{-p_j^O(\cs{i})
    -\sum_{J_a\in\tjobs{j},\cs{a}>\cs{i}}p_j^O(\co{a})}}^k\bigg].
\end{align*}

Again, it can be seen that this is a telescoping sum by considering terms
in order of decreasing completion time. By the arguments given in the proof
of Lemma~\ref{lem:work-leq}, we
see
$$\frac{1}{(1+\eps)m}\sum_{J_a\in\tjobs{j}}p_j^O(\co{a})\leq\co{j}-r_j,$$
giving us a lower bound of~$0$ for the last negative term in the
telescoping sum. Therefore, the total charged to~$J_j$ is at
most~$\Paren{(1+\eps)(\co{j}-r_j)}^k$. Summing over all jobs, we see the total
amount charged is at most
$$\sum_{j\in[n]}(1+\eps)^k(\co{j}-r_j)^k=(1+\eps)^k\opt,$$
which implies the lemma. \qed

\section{Conclusion}

We have shown \srpt\ to be $(1 + \eps)$-speed $O(1)$-competitive for both
average flow time and further for the $\ell_k$-norms of flow time on $m$
identical machines. This combined with previous work shows that \srpt\ is the
best possible algorithm in many aspects for scheduling on $m$ identical
machines. It is known that $\srpt$ is $(2-\frac{1}{m})$-speed 1-competitive on
multiple machines . Further, it is known that no $(\frac{22}{21} - \eps)$-speed
online algorithm is $1$-competitive \cite{PhillipsSTW02}. It remains an
interesting open question to determine the minimum speed needed for an algorithm
for be $1$-competitive on $m$ identical machines. \\

\noindent {\bf Acknowledgements:} We would like to thank the anonymous
reviewers for their suggestions on improving this paper.

\bibliographystyle{alpha}
\bibliography{SRPT}

\end{document}